\titleformat{\section}{\centering\normalfont\scshape}{\Roman{section}.}{5pt}{}
\titleformat{\subsection}{\normalfont\it}{\Alph{subsection}.}{5pt}{}
\titleformat{\subsubsection}{\normalfont\it}{\hspace{4mm}\arabic{subsubsection})}{5pt}{}
\newcommand\infoFootnote[1]{%
  \begingroup
  \renewcommand\thefootnote{}\footnote{#1}%
  \addtocounter{footnote}{-1}%
  \endgroup}
\newtheorem{thm}{Theorem}
\newtheorem{lem}[thm]{Lemma}
\newtheorem{assum}{Assumption}
\newtheorem{defn}{Definition}
\newtheorem{rem}{Remark}
\newcommand{\R}{\mathbb{R}} 
\newcommand{\ab}{\boldsymbol{a}}
\newcommand{\fb}{\boldsymbol{f}}
\newcommand{\gb}{\boldsymbol{g}}
\newcommand{\sbb}{\boldsymbol{s}}
\newcommand{\ub}{\boldsymbol{u}}
\newcommand{\wb}{\boldsymbol{w}}
\newcommand{\xb}{\boldsymbol{x}}
\newcommand{\yb}{\boldsymbol{y}}
\newcommand{\zb}{\boldsymbol{z}}
\newcommand{\xib}{\boldsymbol{\xi}}
\newcommand{\alphab}{\boldsymbol{\alpha}}
\newcommand{\betab}{\boldsymbol{\beta}}
\newcommand{\zerob}{\boldsymbol{0}}
\newcommand{\oneb}{\boldsymbol{1}}
\newcommand{\ybs}{\mathbf{y}}
\newcommand{\xbs}{\mathbf{x}}
\newcommand{\ubs}{\mathbf{u}}
\newcommand{\Ab}{\boldsymbol{A}}
\newcommand{\Fb}{\boldsymbol{F}}
\newcommand{\Hb}{\boldsymbol{H}}
\newcommand{\Ib}{\boldsymbol{I}}
\newcommand{\Ub}{\boldsymbol{U}}
\newcommand{\Wb}{\boldsymbol{W}}
\newcommand{\Xb}{\boldsymbol{X}}
\newcommand{\Yb}{\boldsymbol{Y}}
\newcommand{\Qb}{\boldsymbol{Q}}
\newcommand{\Eb}{\boldsymbol{E}}
\newcommand{\Gb}{\boldsymbol{G}}
\newcommand{\Vb}{\boldsymbol{V}}
\newcommand{\Zb}{\boldsymbol{Z}}
\newcommand{\Cbc}{\boldsymbol{\mathcal{C}}}
\newcommand{\Dbc}{\boldsymbol{\mathcal{D}}}
\newcommand{\Qbc}{\boldsymbol{\mathcal{Q}}}
\newcommand{\Wbc}{\boldsymbol{\mathcal{W}}}
\newcommand{\Nc}{\mathcal{N}}
\newcommand{\Uc}{\mathcal{U}}
\newcommand{\Yc}{\mathcal{Y}}
\newcommand{\Dfrak}{\mathfrak{D}}
\newcommand{\image}[1]{\mathcal{R}\left({#1}\right)}
\newcommand{\rank}[1]{\mathrm{rank}\left({#1}\right)}
\newcommand{\kernel}[1]{\mathcal{N}\left({#1}\right)}
\newcommand{\conv}[1]{\mathrm{conv}\left({#1}\right)}
\newcommand{\blkdiag}[1]{\mathrm{diag}\left({#1}\right)}
\DeclareMathOperator*{\argmin}{\arg\min}
\renewcommand{\boldsymbol}[1]{#1}
\renewcommand{\mathbf}[1]{\mathrm{#1}}
\def\tvdots{\vbox{\baselineskip=2pt \lineskiplimit=0pt \kern6pt \hbox{.}\hbox{.}\hbox{.}}}
\title{\vspace{-2mm}\bf On data usage and predictive behavior of data-driven predictive control \\ with 1-norm regularization}
\author{Manuel Kl\"adtke and Moritz Schulze Darup \vspace{2mm}}
\date{}
  \renewcommand{\headrulewidth}{0pt}%
\begin{document}
\maketitle

\pagestyle{fancy}
\fancyhf{} 
\renewcommand{\headrulewidth}{0pt}
\fancyfoot[C]{\footnotesize 2475-1456 \copyright 2025 IEEE}

\begin{abstract}
We investigate the data usage and predictive behavior of data-driven predictive control (DPC) with 1-norm regularization. Our analysis enables the offline removal of unused data and facilitates a comparison between the identified symmetric structure and data usage against prior knowledge of the true system. This comparison helps assess the suitability of the DPC scheme for effective control.
\end{abstract}
\infoFootnote{M. Kl\"adtke and M. Schulze Darup are with the \href{https://rcs.mb.tu-dortmund.de/}{Control and~Cyberphysical Systems Group}, Faculty of Mechanical Engineering, TU Dortmund University, Germany. E-mails:  \href{mailto:manuel.klaedtke@tu-dortmund.de}{\{manuel.klaedtke, moritz.schulzedarup\}@tu-dortmund.de}. \vspace{0.5mm}}
\infoFootnote{This paper is a \textbf{preprint} of a contribution to the IEEE Control Systems Letters. The DOI of the original paper is \href{https://doi.org/10.1109/LCSYS.2025.3575436}{10.1109/LCSYS.2025.3575436}.} 

\section{Introduction}
\label{sec:Introduction}

Data-driven predictive control (DPC) \cite{Yang2013, Coulson2019DeePC} is a popular alternative to model predictive control (MPC) using linear combinations of trajectory data instead of a model for predictions. Utilizing Willems' Fundamental Lemma \cite{WILLEMS2005}, DPC can provide exact predictions for deterministic LTI systems or certain classes of nonlinear systems \cite{Berberich2020}. However, noise, disturbances, and general nonlinearities typically compromise this exactness. A common solution is to add a regularization term, justified through distributional robustness \cite{Coulson2019RegularizedDeePC} or as convex relaxations of model-based predictive control schemes \cite{Dorfler2021}.
Popular regularizations include quadratic \cite{Berberich2020stability} and 1-norm \cite{Coulson2019DeePC} approaches. 
The 1-norm regularization considered in this paper serves as a convex relaxation of low-rank approximation \cite{MARKOVSKY2008surveySLRA} and thus ``controls the model complexity'' \cite{Dorfler2021} by limiting the number of data trajectories used for predictions.
Hence, in nonlinear DPC, it is expected to choose data ``that is most relevant to the current operating point'' \cite{Mattsson2021}.
However, a specific analysis of its data usage is currently missing in the literature. 

Motivated by this gap, this work characterizes DPC's data usage and predictive behavior using previously established analysis tools \cite{KLAEDTKE2023, Klaedtke2025_AT_Accepted}. We demonstrate that some data is never utilized, connect the trajectory-specific cost of 1-norm regularization to the atomic norm \cite{Chandrasekaran2012_atomicNorm}, and show that the resulting predictive behavior forms a symmetric piecewise affine (PWA) function with subdomains dictating local data usage that scale linearly with the regularization weight. The presentation of these key findings is organized as follows. Section~\ref{sec:fundamentals} summarizes key preliminaries on DPC and introduces the analysis tools employed. Section~\ref{sec:main} contains our core results, analyzing data usage, trajectory-specific effects of 1-norm regularization, and implicit predictive behavior. Finally, Section~\ref{sec:Conclusions} concludes our findings and previews future extensions of this analysis.

\section{Fundamentals}\label{sec:fundamentals}

\subsection{Fundamentals of direct data-driven predictions}
Instead of utilizing a discrete-time state-space model like
$$
    \xb(k+1)=\fb(\xb(k), \ub(k)), \qquad \quad \yb(k) = \gb(\xb(k), \ub(k))
$$
with input $\ub\in\R^m$, state $\xb\in\R^n$, and output ${\yb\in\R^p}$ as in MPC, predictions in DPC are realized based on previously collected trajectory data $\big(\ubs^{(1)}, \ybs^{(1)}\big), \hdots, \big(\ubs^{(\ell)}, \ybs^{(\ell)}\big)\in\R^{m L}\times \R^{p L}$ via linear combinations
\begin{equation}
    \begin{pmatrix}
        \ubs_\text{gen}\\
        \ybs_\text{gen}
    \end{pmatrix}
    =
    \begin{pmatrix}
        \ubs^{(1)}\\
        \ybs^{(1)}
    \end{pmatrix} a_1 + \hdots + 
    \begin{pmatrix}
        \ubs^{(\ell)}\\
        \ybs^{(\ell)}
    \end{pmatrix} a_\ell
    = 
    \Dbc
    \ab. \label{eq:linearComb}
\end{equation}
 Here, the dimensions of the data matrix $\Dbc\in\R^{L(m+p)\times\ell}$ and generator vector $\ab := \begin{pmatrix}
    \ab_1 & \hdots & \ab_\ell
\end{pmatrix}^\top\in\R^\ell$ are specified by the length~$L$ of recorded (and generated) trajectories and the number~$\ell$ of data trajectories used for predictions. 
Note that we use upright notation to denote sequences of system quantities, e.g., the $i$-th input sequence in $\Dbc$ being $\ubs^{(i)} = \begin{pmatrix}
    \ub(k_i)^\top & \hdots & \ub(k_i+L-1)^\top
\end{pmatrix}^\top$ for some $k_i$.
Assuming exact data generated by an LTI system and that $L$ exceeds the system's lag, the image $\image{\Dbc}$ is equivalent to the set of all possible system trajectories (of length $L$) if and only if \cite{Markovsky2020}
\begin{equation}\label{eq:GPE} 
    \rank{\Dbc}=L m + n. 
\end{equation} 
The generalized persistency of excitation condition \eqref{eq:GPE} justifies the linear combinations \eqref{eq:linearComb}, since 
\begin{equation}
    \begin{pmatrix}
        \ubs_\text{gen}\\
        \ybs_\text{gen}
    \end{pmatrix}
    \in
    \image{\Dbc} \; \iff \; \exists \ab\; \text{such that}\; 
    \begin{pmatrix}
        \ubs_\text{gen}\\
        \ybs_\text{gen}
    \end{pmatrix} = \Dbc\ab. \label{eq:imageRep0}
\end{equation}
Representing system trajectories in this way is also known as an image representation, in contrast to, e.g., a state-space representation. A popular sufficient condition for data to satisfy \eqref{eq:GPE} is known as Willems' Fundamental Lemma \cite{WILLEMS2005}, which has become synonymous with using image representations. 
To include the current initial condition of the system as a starting point for predicted trajectories, the generated I/O-sequence is typically partitioned into a past section $(\ubs_p, \ybs_p)$ and a future section $(\ubs, \ybs)$ with $N_p$ respectively $N$ time-steps yielding
$$
    \begin{pmatrix}
        \ubs_p\\
        \ubs
    \end{pmatrix} = \ubs_\text{gen} 
    = 
    \begin{pmatrix}
        \Ub_p\\
        \Ub
    \end{pmatrix} \ab \quad \text{and} \quad
    \begin{pmatrix}
        \ybs_p\\
        \ybs
    \end{pmatrix} = \ybs_\text{gen} 
    = 
    \begin{pmatrix}
        \Yb_p\\
        \Yb
    \end{pmatrix} \ab. 
$$
The past section of a predicted trajectory is then forced to match the I/O-data $\xib$ recorded in the most recent $N_p$ time-steps during closed-loop operation, i.e., the constraints
$$
    \xib= 
    \begin{pmatrix}
        \ubs_p \\ \ybs_p
    \end{pmatrix}
    =
    \begin{pmatrix}
        \Ub_p \\ \Yb_p
    \end{pmatrix}\ab
    = \Wb \ab
$$
force any predicted trajectory to start with the most recently witnessed behavior of the system, acting as an initial condition. Note that $\xib$ is indeed a (non-minimal) state of the LTI system, if $N_p$ is chosen greater or equal to its lag. 
For more concise notation, we define 
$\Zb := \begin{pmatrix}
        \Wb^\top & \Ub^\top
    \end{pmatrix}\!^\top,\,\,  \zb := \begin{pmatrix}
        \xib^\top & \ubs^\top
    \end{pmatrix}\!^\top,\,\,   
    \wb := \begin{pmatrix}
        \zb^\top & \ybs^\top
    \end{pmatrix}\!^\top\!.$
    Additionally, with a slight abuse of notation, we redefine
    $\Dbc := \big(
        \Zb^\top \quad \Yb^\top
\big)^\top,$
which is a block-row permutation to reflect this new partitioning.
\begin{rem}
    \label{rem:Statespace}
    Although we have introduced the data-driven predictions in an I/O setting, they can be straightforwardly modified to a state-space setting \cite{DePersis2020}.  To this end, consider the initial state $\xb_0\in\R^n$ and predicted state sequence $\xbs\in \R^{n N}$, which replace $\xib$ and $\ybs$. The corresponding data matrices are $\Xb_0\in\R^{n\times\ell}$ and $\Xb\in\R^{n N\times \ell}$, replacing $\Wb$ and $\Yb$.
    This leads to a data matrix $\Dbc\in \R^{n+(m+n)N\times \ell}$.
    Our considerations and condition \eqref{eq:GPE} with $\rank{\Dbc}=n+m N$ hold for both settings, and we employ     the latter for visualization of low dimensional examples in Fig.~\ref{fig:scalingCRs} and~\ref{fig:symmetryPredictor}.
\end{rem}

Notably, in the ideal deterministic LTI setting with condition \eqref{eq:GPE}, $\Dbc$ always has rank deficiency
    $\rank{\Dbc} = \rank{\Zb}, $
such that the image representation \eqref{eq:imageRep0} implies a unique (and exact) linear predictor mapping $\ybs = \hat\ybs(\xib, \ubs)$.
However, this rank deficiency, and with it the unique predictions, are typically lost in the presence of noise or nonlinearities. 
Specific for DPC, a commonly used remedy is given by the addition of a regularization term $h(\ab)$ to the objective function \eqref{eq:DPCcost} \cite{Coulson2019DeePC, Berberich2020stability}, which
is in
the focus of this work.

\subsection{Regularized DPC}

The optimal control problem (OCP) that is solved for DPC in every time step can be stated as 
\begin{subequations}
\label{eq:DPC}
\begin{align}
\min_{\ubs,\ybs,\ab} 
 J(\xib, \ubs, &\ybs) + h(\ab) \label{eq:DPCcost} \\
\text{s.t.} \quad \quad  \begin{pmatrix}
     \xib \\ \ubs \\ \ybs 
\end{pmatrix} &= \begin{pmatrix}
    \Wb \\ \Ub \\ \Yb 
\end{pmatrix}\ab, \label{eq:DPCeqConstr}\\
\left(\ubs, \ybs \right) &\in \Uc \times \Yc.  \label{eq:DPCsetConstr}
\end{align}
\end{subequations}
with control objective $J(\xib, \ubs, \ybs)$, regularization $h(\ab)$, and input-output constraints $\Uc \times \Yc$. Conditions for equivalence of DPC and MPC are well established for some special cases. In particular, the equivalence with MPC for LTI systems holds with exact data satisfying \eqref{eq:GPE} and setting the regularization to $h(\ab) = 0$ \cite{Coulson2019DeePC}. In the presence of noise or nonlinearities, the unregularized OCP may use the (unrealistic) additional degrees of freedom $\rank{\Dbc} - \rank{\Zb}>0$ to greedily minimize the objective function $J(\xib, \ubs, \ybs)$. Given enough data (columns), it is established and realistic (see, e.g., \cite{Breschi2022new, Mattsson2024, KLAEDTKE2023} for a discussion) to make the following assumption.
\begin{assum}\label{assum:fullRank}
    The data matrix $\Dbc$ has full row-rank.
\end{assum}
Note that full row-rank of $\Dbc$ renders \eqref{eq:DPCeqConstr} meaningless without regularization, since there is an $\ab$ solving \eqref{eq:DPCeqConstr} for any arbitrary left-hand side trajectory. 

\subsection{Trajectory-specific effect of regularization and implicit predictors}

We briefly introduce and discuss the two main analysis tools used in the paper. For a more extensive discussion and motivation, we refer to \cite{KLAEDTKE2023, Klaedtke2025_AT_Accepted}.
\begin{defn}[\cite{Klaedtke2025_AT_Accepted}]\label{def:trajectorySpecific}
    We call the solution $h^\ast(\wb)$ to 
\begin{equation}
    h^\ast(\wb) := h(\ab^\ast(\wb)) 
=  \min_{\ab} 
  h(\ab) \quad  \text{s.t.} 
 \quad  \wb = \Dbc\ab \label{eq:OP_trajectorySpecific}
\end{equation} 
  the \textit{trajectory-specific effect} of $h(\ab)$ given $\Dbc$.
\end{defn}
Note that the trajectory $\wb$ is not an optimization variable, making additional constraints like \eqref{eq:DPCsetConstr} irrelevant. Intuitively, the trajectory specific effect $h^\ast(\wb)$ represents a price tag, indicating the cost of synthesizing a specific trajectory $\wb$ according to the choice of $h(\ab)$. Additionally, \eqref{eq:OP_trajectorySpecific} naturally emerges as an inner optimization problem to \eqref{eq:DPC}, allowing for the elimination of the auxiliary variable $\ab$.
\begin{defn}[\cite{KLAEDTKE2023}]\label{def:implicit_predictor}
    We call $\hat\ybs(\xib,\ubs)$ an \textit{implicit predictor} for an OCP if including the constraint $\ybs = \hat\ybs(\xib, \ubs)$ does not alter the (set of) minimizers $(\ubs^\ast, \ybs^\ast)$ and the optimal value.    
\end{defn}

While DPC does not enforce a prediction model, the multistep predictor $\hat\ybs_\text{DPC}(\xib,\ubs)$ consistently aligns with its predictions, reflecting the predictive behavior DPC attributes to the data-generating system. Its accuracy in capturing the true system's properties depends on the design choices made in DPC.
As discussed in \cite{Klaedtke2025_AT_Accepted},  Definition~\ref{def:implicit_predictor} is more conceptional than constructive. However, a valid implicit predictor for the DPC problem \eqref{eq:DPC} can be constructed via
\begin{subequations}
\label{eq:implicitPredictorOP}
\begin{align}
\hat\ybs_\text{DPC}(\xib, \ubs) = \argmin_\ybs \min_{\ab} 
 J(\xib, \ubs, &\ybs) + h(\ab) \label{eq:implicitPredictorOPcost} \\
\text{s.t.} \quad \quad  \begin{pmatrix}
     \xib \\ \ubs \\ \ybs 
\end{pmatrix} &= \begin{pmatrix}
    \Wb \\ \Ub \\ \Yb 
\end{pmatrix}\ab, \label{eq:implicitPredictorOPeqConstr}\\
 \ybs  &\in \Yc.  \label{eq:implicitPredictorOPsetConstr}
\end{align}
\end{subequations}
Here, we treat $(\xib, \ubs)$ as parameters and optimize over $(\ab, \ybs)$, and hence additional set constraints $\ubs\in \Uc$ can be dropped while $\ybs \in \Yc$ still need to be considered. This construction procedure retrieves a valid implicit predictor, since \eqref{eq:implicitPredictorOP} naturally occurs as an inner optimization problem to \eqref{eq:DPC}. Thus, its parametric solution $\hat\ybs_\text{DPC}(\xib, \ubs)$ satisfies Definition~\ref{def:implicit_predictor}.
Given data $\Dbc$, \eqref{eq:implicitPredictorOP_1Norm} can often be solved simply using an appropriate multiparametric solver like MPT \cite{MPT3}, and we show examples of this in Fig.~\ref{fig:scalingCRs} and~\ref{fig:symmetryPredictor}. However, rather than focusing on a particular implicit predictor for a specific data set generated by a distinct system, we are more interested in the structural properties of the implicit predictor that are applicable to any (realistic) data set produced by any system. These general properties are mainly induced by design choices when setting up the OCP \eqref{eq:DPC}, such as the choice of 1-norm regularization. We believe that understanding the consequences of these choices is paramount to meaningful DPC design.

\subsection{Preliminaries on atomic norm}\label{sec:atomic_norm}

We briefly discuss preliminaries on the so-called atomic norm \cite{Chandrasekaran2012_atomicNorm}, which are independent of the DPC context and serve as a foundational aspect for our novel contribution. The relation to the trajectory-specific effect of 1-norm regularization will be established in Section~\ref{sec:traj_specific}. Consider a set $\Dfrak$, which is a collection of atoms, with many different examples of such atoms given in \cite{Chandrasekaran2012_atomicNorm}. Next, assume that the elements $\wb^{(i)}\in\Dfrak$ are the extreme points of the convex hull $\conv{\Dfrak}$, that $\Dfrak$ is centrally symmetric about the origin, i.e., 
$
\wb^{(i)}\in\Dfrak \iff -\wb^{(i)}\in\Dfrak,
$
and that the origin is the centroid of the convex hull $\conv{\Dfrak}$.
Then, the gauge function (or Minkowski functional) of $\conv{\Dfrak}$,
\begin{subequations}\label{eq:AtomicNorm_Def}
    \begin{align}
    \|\wb\|_\Dfrak &:= \inf_t t \quad \text{s.t.}\quad t > 0,\; \wb \in t \cdot \conv{\Dfrak} \label{eq:AtomicNorm_Def1}\\
    &\,\,= \inf_{\ab} \!\!\sum_{\wb^{(i)}\in\Dfrak} \!\! \ab_i \quad \text{s.t.}\quad \ab \geq \zerob,\, \wb = \!\! \sum_{\wb^{(i)}\in\Dfrak} \!\!\ab_i \wb^{(i)}, \label{eq:AtomicNorm_Def2}
    \end{align}
\end{subequations}
is a norm, which is known as the \textit{atomic norm}. Intuitively, the gauge of a set (here, $\mathrm{conv}(\Dfrak)$) is the factor $t^\ast$ by which it needs to be scaled for a point $\wb$ to become an extreme point. Examples for applications of the atomic norm, including system identification and control, can be found in \cite{Chandrasekaran2012_atomicNorm}.

\section{Data usage and predictive behavior} \label{sec:main}

Having laid out the fundamentals, we are ready to present our main results. More precisely, we first introduce a useful reformulation of the DPC problem in Section~\ref{sec:1normToLin}. Following this, Section~\ref{sec:data_usage} demonstrates that certain data is never utilized by DPC with 1-norm regularization. Next, Section~\ref{sec:traj_specific} explores the trajectory-specific effects of 1-norm regularization in DPC, providing geometric intuition through the atomic norm. Finally, in Section~\ref{sec:implicit_predictor}, we analyze the predictive behavior by establishing structural properties of the implicit predictor induced by 1-norm regularization. To emphasize our findings on induced structures and geometric interpretations, we visualize them using low-dimensional numerical toy examples. The code used to generate these figures is available on our GitHub repository\footnote{see \url{https://github.com/Control-and-Cyberphysical-Systems/data-usage-dpc-1-norm}}.

\subsection{Useful reformulation of the DPC problem}\label{sec:1normToLin}

In mathematical programming, it is common to linearly reformulate costs (or constraints) involving the 1-norm by introducing additional variables. We split $\ab = \ab_+ - \ab_-$ into a positive part $\ab_+\geq \zerob$ and negative part $-\ab_-\leq \zerob$ with
$
\ab_{\pm}:= \begin{pmatrix}
        \ab_+^\top & \ab_-^\top
    \end{pmatrix}^\top$ such that $\Dbc \ab = 
    \begin{pmatrix}
        \Dbc & -\Dbc
    \end{pmatrix}\ab_\pm.
$
We then rewrite the DPC problem \eqref{eq:DPC} with $h(\ab) = \lambda \|\ab\|_1$ as
\begin{subequations}
\label{eq:DPC_1normLinear}
\begin{align}
\min_{\ubs,\ybs,\ab_\pm} 
 J(\xib, \ubs, &\ybs) +\lambda \oneb^\top \ab_\pm\label{eq:DPC_1normLinearCost} \\
\text{s.t.} \quad \quad  \begin{pmatrix}
     \xib \\ \ubs \\ \ybs 
\end{pmatrix} &= \Dbc_\pm\ab_\pm, \;\; \ab_\pm \geq \zerob\label{eq:DPC_1normLinearConicConstr}\\
\left(\ubs, \ybs \right) &\in \Uc \times \Yc.  \label{eq:DPC_1normLinearSetConstr}
\end{align}
\end{subequations}
with $\Dbc_\pm := \begin{pmatrix}
        \Dbc & -\Dbc
    \end{pmatrix}$. The linear combination of data columns in $\Dbc$ is thus replaced by a conical combination of mirrored data in $\Dbc_\pm$. This reformulation is possible because the entries of the optimizer satisfy $\ab_{+,i}^\ast\, \ab_{-,i}^\ast = 0$, implying
 $$
 \|\ab^\ast\|_1 = \|\ab_+^\ast - \ab_-^\ast\|_1 = \|\ab_+^\ast\|_1 + \|\ab_-^\ast\|_1 = \|\ab_\pm^\ast\|_1 = \oneb^\top \ab_\pm^\ast.
 $$ 
We note that this particular reformulation is done to develop the following theory. When considering solution speed, other reformulations like in \cite[Sec.~6.1.1]{boyd2004} may perform better due to different sparsity patterns.

\subsection{Unused data in DPC with 1-norm regularization}\label{sec:data_usage}
We denote with
    $\Dfrak := \{\wb^{(i)}\in\R^{(m+p)(N_p+N)} | i\in {1, ..., \ell}\}$
the set of all I/O data trajectories $\wb^{(i)}$ contained as columns in $\Dbc$, and likewise for $\wb^{(i)}_{\pm} \in \Dfrak_\pm$ being columns of $\Dbc_\pm$. In this notation, the trajectory synthesis via linear \eqref{eq:DPCeqConstr} or conical \eqref{eq:DPC_1normLinearConicConstr} combinations can be equivalently written as
\begin{align*}
    &\wb =  \!\!\sum_{\wb^{(i)} \in \Dfrak} \ab_i \wb^{(i)} 
    \! 
    \!\!\!\!\!\!\!\! &&\iff \! \wb = \!\! \sum_{\wb^{(i)}_{\pm} \in \Dfrak_\pm} \! \ab_{\pm,i} \wb^{(i)}_\pm,  \;  \ab_{\pm,i} \geq 0 \\
    \iff& \wb \! \in \mathrm{span}(\Dfrak) \!\!\!\!\! &&\iff  \wb \in \mathrm{coni}(\Dfrak_\pm).
\end{align*}
The linear span $\mathrm{span}(\cdot)$ refers to the set of all trajectories that can be synthesized through linear combinations, while the conical hull $\mathrm{coni}(\cdot)$ refers to those that can be synthesized through conical combinations.
Now, consider the subset $\overline{\Dfrak}_\pm \subseteq \Dfrak_\pm$ of extreme points of the convex hull $\conv{\mathfrak{D}_\pm}$, for which standard construction algorithms, such as Quickhull \cite{Barber1996_quickhull}, are available.
In the same way as $\Dfrak_\pm$ is constructed from $\Dfrak$ by mirroring its elements, we define a set $\overline{\Dfrak}:=\Dfrak \cap \overline{\Dfrak}_\pm$ containing only the ``un-mirrored'' data points of $\overline{\Dfrak}_\pm$. 
The relations between these sets is visualized in Fig.~\ref{fig:visualizeDataSets}.
\begin{figure}
	\centering
	\def\svgwidth{\linewidth}
    \includeinkscape{Figures/visualizeDataSets_done}
	\caption{Visualization of data sets with elements in $\Dfrak$ drawn from $\Nc(\zerob, \Ib)$. (a)~Original sets $\Dfrak$ in green ($\times$), and $\Dfrak_\pm$ in green and orange ($\times, \circ$). The convex hull $\conv{\Dfrak_\pm}$ is also highlighted. (b) Reduced sets $\overline\Dfrak$ in green ($\times$), and $\overline\Dfrak_\pm$ in green and orange ($\times, \circ$).}
	\label{fig:visualizeDataSets}
\end{figure}
Next, we show that the linear spans and conical hulls remain unchanged.
\begin{lem}\label{lem:conicalHull}
    Consider the trajectory data set $\Dfrak$, the mirrored set $\Dfrak_\pm$, and their subsets $\overline\Dfrak, \overline\Dfrak_\pm$. 
    Their conical hulls $\mathrm{coni}(\cdot)$ and linear spans $\mathrm{span}(\cdot)$ satisfy
    $
        \mathrm{span}(\overline\Dfrak) = \mathrm{coni}(\overline\Dfrak_\pm)  =\mathrm{coni}(\Dfrak_\pm) = \mathrm{span}(\Dfrak).
    $
\end{lem}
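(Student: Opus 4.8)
The plan is to verify the four-set chain
$\mathrm{span}(\overline\Dfrak) = \mathrm{coni}(\overline\Dfrak_\pm) = \mathrm{coni}(\Dfrak_\pm) = \mathrm{span}(\Dfrak)$
by splitting it into three equalities and handling two recurring ingredients separately: an elementary identity relating the conical hull and the linear span of a centrally symmetric set, and an extreme-point reduction for polytopes.

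First I would establish the general fact that for any finite set $\mathfrak{S}$ with $\mathfrak{S} = -\mathfrak{S}$ one has $\mathrm{coni}(\mathfrak{S}) = \mathrm{span}(\mathfrak{S})$. Indeed, the inclusion $\mathrm{coni}(\mathfrak{S}) \subseteq \mathrm{span}(\mathfrak{S})$ is trivial, and for the reverse I pair each generator $\sbb^{(i)}$ with its mirror $-\sbb^{(i)}$ and use $\ab_{+,i}\sbb^{(i)} + \ab_{-,i}(-\sbb^{(i)}) = (\ab_{+,i}-\ab_{-,i})\sbb^{(i)}$; as $\ab_{+,i}, \ab_{-,i}$ range over the nonnegatives, the coefficient $\ab_{+,i}-\ab_{-,i}$ sweeps out all of $\R$, so any linear combination is realized conically. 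Applied to $\mathfrak{S} = \Dfrak_\pm$, and noting that mirroring adds no new span directions, this yields the right-most equality $\mathrm{coni}(\Dfrak_\pm) = \mathrm{span}(\Dfrak_\pm) = \mathrm{span}(\Dfrak)$, which is essentially the equivalence already recorded above this statement.

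For the left-most equality I first need that the vertex set $\overline\Dfrak_\pm$ inherits central symmetry: since $\Dfrak_\pm = -\Dfrak_\pm$, the polytope $\conv{\Dfrak_\pm}$ is centrally symmetric about the origin, and the extreme-point set of a centrally symmetric convex body is itself symmetric, so $\overline\Dfrak_\pm = -\overline\Dfrak_\pm$. Because $\overline\Dfrak_\pm \subseteq \Dfrak_\pm = \Dfrak \cup (-\Dfrak)$ together with $\overline\Dfrak := \Dfrak \cap \overline\Dfrak_\pm$, symmetry gives $\overline\Dfrak_\pm = \overline\Dfrak \cup (-\overline\Dfrak)$, i.e.\ $\overline\Dfrak_\pm$ is exactly the mirrored version of $\overline\Dfrak$. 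Reusing the symmetric-set identity with $\mathfrak{S} = \overline\Dfrak_\pm$ then delivers $\mathrm{coni}(\overline\Dfrak_\pm) = \mathrm{span}(\overline\Dfrak_\pm) = \mathrm{span}(\overline\Dfrak)$.

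The middle equality $\mathrm{coni}(\overline\Dfrak_\pm) = \mathrm{coni}(\Dfrak_\pm)$ is the crux, and the step I expect to be the main obstacle, since it asserts that discarding all non-extreme data changes nothing. One inclusion is immediate from $\overline\Dfrak_\pm \subseteq \Dfrak_\pm$. For the reverse I would invoke the fact that a compact polytope equals the convex hull of its extreme points (Minkowski's theorem): as $\Dfrak_\pm$ is a finite point set, $\conv{\Dfrak_\pm}$ is compact and $\conv{\Dfrak_\pm} = \conv{\overline\Dfrak_\pm}$. Hence every generator $\wb^{(i)}_\pm \in \Dfrak_\pm$ lies in $\conv{\overline\Dfrak_\pm}$ and is therefore a convex, and in particular conical, combination of elements of $\overline\Dfrak_\pm$, so $\Dfrak_\pm \subseteq \mathrm{coni}(\overline\Dfrak_\pm)$ and thus $\mathrm{coni}(\Dfrak_\pm) \subseteq \mathrm{coni}(\overline\Dfrak_\pm)$. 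Chaining the three equalities closes the proof. The only point requiring care is a degenerate configuration, such as the origin lying in $\Dfrak_\pm$ or a coincidence $\wb^{(i)} = -\wb^{(j)}$; neither affects extreme-point membership, and both can be absorbed into the symmetry bookkeeping of the second step.
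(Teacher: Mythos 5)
Your proof is correct and follows essentially the same route as the paper's: the same decomposition into three equalities, with the outer two reduced to the symmetric-set identity $\mathrm{coni}(\mathfrak{S})=\mathrm{span}(\mathfrak{S})$ and the middle one settled by noting that discarded points lie in $\conv{\overline\Dfrak_\pm}\subseteq\mathrm{coni}(\overline\Dfrak_\pm)$. The only difference is that you spell out what the paper dismisses as ``trivial from construction,'' in particular verifying via central symmetry of the extreme-point set that $\overline\Dfrak_\pm$ really is the mirrored version of $\overline\Dfrak$ — a worthwhile check, but not a different argument.
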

\begin{proof}
    The relations $\mathrm{coni}(\Dfrak_\pm) = \mathrm{span}(\Dfrak)$ and $\mathrm{coni}(\overline\Dfrak_\pm) = \mathrm{span}(\overline\Dfrak)$ follow trivially from construction of the mirrored data sets $\Dfrak_\pm, \overline\Dfrak_\pm$. Furthermore, we have $\mathrm{coni}(\overline\Dfrak_\pm)=\mathrm{coni}(\Dfrak_\pm)$, since $\Dfrak_\pm\setminus\overline\Dfrak_\pm \subseteq \mathrm{coni}(\overline\Dfrak_\pm)$ as per the construction procedure of $\overline\Dfrak_\pm$.
\end{proof}
 The following theorem establishes that, when using 1-norm regularization, any point removed during the construction of $\overline\Dfrak$ (or $\overline\Dfrak_\pm$) is completely irrelevant for the DPC problem.
\begin{thm}\label{thm:Preprocessing}
    Consider the DPC problem \eqref{eq:DPC} with 1-norm regularization $h(\ab) = \lambda \|\ab\|_1$. Replacing the data matrix $\Dbc$ with a data matrix $\overline\Dbc$ whose columns are the elements of $\overline\Dfrak$ does not alter the optimal cost and optimizers $(\ubs^\ast, \ybs^\ast)$.
\end{thm}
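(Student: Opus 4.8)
We remove columns of $\Dbc$ that are not extreme points of $\conv{\Dfrak_\pm}$, keeping only $\overline\Dfrak$, and claim the DPC problem's optimal cost and minimizers $(\ubs^\ast,\ybs^\ast)$ are unchanged. The regularization is $h(\ab)=\lambda\|\ab\|_1$.

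**Key insight:** The removed columns are in the conical hull of the retained mirrored data (Lemma). Since $\|\ab\|_1 = \oneb^\top\ab_\pm$ under the optimal sign split, the regularizer equals the total conical weight. A point written using a removed atom can be rewritten using only extreme atoms at no greater $\ell_1$ cost — this is precisely the atomic-norm picture.

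Let me think about this more carefully.\textbf{Approach.} The plan is to work with the linearly reformulated DPC problem \eqref{eq:DPC_1normLinear}, where the regularizer becomes $\lambda\oneb^\top\ab_\pm$ with $\ab_\pm\geq\zerob$ and the constraint reads $\wb=\Dbc_\pm\ab_\pm$, i.e.\ $\wb$ is a conical combination of the mirrored atoms in $\Dfrak_\pm$. I would show that restricting the synthesis to the reduced set $\overline\Dfrak_\pm$ (equivalently, replacing $\Dbc$ by $\overline\Dbc$) leaves the feasible set of achievable trajectories $\wb=(\xib^\top,\ubs^\top,\ybs^\top)^\top$ unchanged \emph{and} never increases the minimal regularization cost attached to any such $\wb$. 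The natural vehicle for this is the trajectory-specific effect of Definition~\ref{def:trajectorySpecific}: it suffices to prove that $h^\ast(\wb)$, the minimal $\lambda\|\ab\|_1$ over all $\ab$ with $\wb=\Dbc\ab$, is identical whether computed with $\Dbc$ or with $\overline\Dbc$. Because \eqref{eq:OP_trajectorySpecific} arises as the inner problem of \eqref{eq:DPC}, matching $h^\ast$ on all feasible $\wb$ immediately yields equality of optimal cost and of the minimizers $(\ubs^\ast,\ybs^\ast)$, as these are determined by the outer minimization of $J(\xib,\ubs,\ybs)+h^\ast(\wb)$.

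\textbf{Key steps.} First, using the sign-split identity $\|\ab\|_1=\oneb^\top\ab_\pm$ established in Section~\ref{sec:1normToLin}, I would recast $h^\ast(\wb)/\lambda$ as the minimal total conical weight $\sum_i\ab_{\pm,i}$ needed to write $\wb\in\mathrm{coni}(\Dfrak_\pm)$. Second, I would invoke Lemma~\ref{lem:conicalHull} to guarantee that the conical hulls coincide, $\mathrm{coni}(\overline\Dfrak_\pm)=\mathrm{coni}(\Dfrak_\pm)$, so the achievable set of $\wb$ is unaffected by the reduction; hence feasibility of \eqref{eq:DPCsetConstr} is preserved. Third, and most importantly, I would show the minimal conical weight does not increase: given any representation of $\wb$ using a removed atom $\wb^{(j)}_\pm\in\Dfrak_\pm\setminus\overline\Dfrak_\pm$, the construction of $\overline\Dfrak_\pm$ guarantees $\wb^{(j)}_\pm\in\mathrm{coni}(\overline\Dfrak_\pm)$, i.e.\ $\wb^{(j)}_\pm=\sum_{k}\mu_k\,\overline\wb^{(k)}_\pm$ with $\mu_k\geq0$. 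Substituting this expansion and collecting coefficients produces a representation of $\wb$ using only retained atoms. The crux is verifying that the total weight does \emph{not} grow, which follows because the removed atoms, being non-extreme points of $\conv{\Dfrak_\pm}$ whose centroid is the origin, admit a convex representation $\sum_k\mu_k=1$ in terms of the extreme atoms; this keeps $\sum_i\ab_{\pm,i}$ nonincreasing. The reverse inequality ($\overline\Dbc$ cannot do better than $\Dbc$) is trivial since $\overline\Dfrak_\pm\subseteq\Dfrak_\pm$ enlarges the feasible set, so the two trajectory-specific effects coincide.

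\textbf{Main obstacle.} The delicate point is the weight-nonincrease in the third step. The central symmetry and origin-centroid structure underlying the atomic norm ensure that a non-extreme atom lies in the \emph{convex} hull of the extreme atoms (coefficients summing to one), not merely in their conical hull; without this, substituting could inflate $\sum_i\ab_{\pm,i}$ and raise the regularization cost. I would make this rigorous by appealing to the gauge/atomic-norm characterization \eqref{eq:AtomicNorm_Def}, where $h^\ast(\wb)=\lambda\|\wb\|_\Dfrak$ and the atomic norm depends only on $\conv{\Dfrak_\pm}$, which is manifestly unchanged by discarding non-extreme points. Framed this way, the theorem reduces to the elementary fact that the gauge of a polytope is determined solely by its vertices, making the equality of optimal costs and minimizers immediate.
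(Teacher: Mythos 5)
Your proposal is correct and rests on the same core idea as the paper's proof: pass to the conical reformulation of Section~\ref{sec:1normToLin} and substitute any removed atom $\wb^{(j)}_\pm\in\Dfrak_\pm\setminus\overline\Dfrak_\pm$ by a combination of the retained extreme atoms without increasing the total weight $\oneb^\top\ab_\pm$. The packaging differs slightly. The paper argues by contradiction directly on an optimizer of \eqref{eq:DPC_1normLinear}: it asserts that a non-extreme atom admits a conical representation in $\overline\Dfrak_\pm$ with weights summing to \emph{strictly} less than one, so any optimizer with $\ab_{\pm,j}^\ast\neq 0$ can be strictly improved; this yields the stronger conclusion that removed data is never used by any optimizer. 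You instead prove a two-sided inequality on the trajectory-specific effect $h^\ast(\wb)$ (equivalently, you observe that the gauge of $\conv{\Dfrak_\pm}$ depends only on its extreme points), using the non-strict bound $\sum_k\mu_k=1$ coming from the convex representation of a non-extreme point. Your non-strict version is arguably the safer one: a removed atom lying on a proper face of $\conv{\Dfrak_\pm}$ without being a vertex has gauge exactly one, so the paper's strict inequality implicitly assumes the data is in general position, whereas your argument delivers the stated equality of optimal cost and minimizers $(\ubs^\ast,\ybs^\ast)$ unconditionally --- at the price of not establishing the paper's auxiliary claim that $\ab_j^\ast=0$ for every optimizer. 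Both routes prove the theorem as stated.
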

\begin{proof}
   We prove the claim by demonstrating that none of the data trajectories in $\Dfrak\setminus \overline\Dfrak$ are used, even when considering the entire dataset $\Dfrak$. That is, we have $\ab_j^\ast=0$ for all $j$ with $\wb^{(j)}\in \Dfrak\setminus \overline\Dfrak$. Hence, their removal does not affect the outcome of the OCP in the claimed sense. We shift our focus to the equivalent reformulation \eqref{eq:DPC_1normLinear}, and proceed with an indirect proof. Consider an optimizer triple $(\ubs^\ast, \ybs^\ast, \ab_\pm^\ast)$ with corresponding optimum
    $$
        V^\ast = J(\xib, \ubs^\ast, \ybs^\ast) + \lambda \oneb^\top \ab_\pm^\ast = J(\xib, \ubs^\ast, \ybs^\ast) + \lambda \sum_{\wb^{(i)}_{\pm} \in \Dfrak_\pm} \! \ab_{\pm,i}^\ast
    $$
    that has one (or more) $\ab_{\pm, j}^\ast \neq 0$ with $\wb^{(j)}_\pm\in\Dfrak_\pm\setminus \overline\Dfrak_\pm$. However, by construction of $\overline\Dfrak_\pm$, there exists a conical combination 
    $$
        \wb^{(j)}_\pm = \sum_{\wb^{(i)}_{\pm} \in \overline\Dfrak_\pm} \!\! \overline\ab_{\pm,i} \wb^{(i)}_\pm, \quad \overline\ab_{\pm,i} \geq 0 \;\; \text{with} \!  \sum_{\wb^{(i)}_{\pm} \in \overline\Dfrak_\pm}  \overline\ab_{\pm,i} < 1.
    $$
     Now, in the trajectory synthesis \eqref{eq:DPC_1normLinearConicConstr}, consider replacing $\wb^{(j)}_\pm$ with this new conical combination as follows
     \begin{align*}
         \begin{pmatrix}
            \xib \\
            \ubs^\ast \\
            \ybs^\ast
        \end{pmatrix}
        &=
        \sum_{\wb^{(i)}_{\pm} \in \Dfrak_\pm} \!\! \ab_{\pm,i}^\ast \wb^{(i)}_\pm 
        = \ab_{\pm,j}^\ast \wb^{(j)}_\pm + \sum_{\wb^{(i)}_{\pm} \in \Dfrak_\pm \setminus \wb^{(j)}_\pm} \!\! \ab_{\pm,i}^\ast \wb^{(i)}_\pm \\
        &= \ab_{\pm,j}^\ast \sum_{\wb^{(i)}_{\pm} \in \overline\Dfrak_\pm} \!\! \overline\ab_{\pm,i} \wb^{(i)}_\pm + \sum_{\wb^{(i)}_{\pm} \in \Dfrak_\pm \setminus \wb^{(j)}_\pm} \!\! \ab_{\pm,i}^\ast \wb^{(i)}_\pm.
     \end{align*}
     The associated optimizer satisfies all constraints and yields
     \begin{align*}
         \overline V &= J(\xib, \ubs^\ast, \ybs^\ast) +\lambda\ab_{\pm,j}^\ast \underbrace{\sum_{\wb^{(i)}_{\pm} \in \overline\Dfrak_\pm} \!\! \overline\ab_{\pm,i}}_{<1} + \lambda\sum_{\wb^{(i)}_{\pm} \in \Dfrak_\pm \setminus \wb^{(j)}_\pm} \!\! \ab_{\pm,i}^\ast \\
         &< J(\xib, \ubs^\ast, \ybs^\ast) + \lambda\ab_{\pm,j}^\ast  + \lambda\sum_{\wb^{(i)}_{\pm} \in \Dfrak_\pm \setminus \wb^{(j)}_\pm} \!\! \ab_{\pm,i}^\ast = V^\ast.
     \end{align*}
     Hence, $\ab_\pm^\ast$ is not optimal, which proves the claim.
\end{proof}
Therefore, data in $\Dfrak\setminus\overline\Dfrak$ can be discarded without affecting the outcome of the OCP. For instance, the example in Fig.~\ref{fig:visualizeDataSets} retains only $3$ of the original $8$ data samples. The implications for DPC are manifold, including, among others:\\ 
    1)  We recommend constructing the relevant data set $\overline\Dfrak$ offline as a preprocessing step to enhance online computation speed. \\
    2) The criterion for including or excluding trajectory data is implicitly (and perhaps unknowingly) determined by the choice of regularization and may or may not align with the characteristics of the true system. Similar aspects will be explored in Section~\ref{sec:implicit_predictor}. \\
    3) The choice of coordinates for the recorded data has a big impact on the shape of $\mathrm{conv}(\Dfrak_\pm)$, and there may be detrimental choices that discard data samples which are important for the control task at hand. \\
    4) The exploration strategy used for generating data might also have a big impact on the shape of $\mathrm{conv}(\Dfrak_\pm)$ and share of unused trajectory samples $\Dfrak\setminus\overline\Dfrak$. \\
However, exploring these aspects in depth is beyond the scope of this paper and will be the focus of future work.

\subsection{Trajectory-specific effect of 1-norm regularization}\label{sec:traj_specific}

The following theorem establishes a connection between the trajectory-specific effect of 1-norm regularization and the atomic norm introduced in Section~\ref{sec:atomic_norm}. 
\begin{thm}\label{thm:1NormTrajSpecificAtomic}
    Under Assumption~\ref{assum:fullRank}, the trajectory-specific effect of the 1-norm regularization $h(\ab) =  \lambda\|\ab\|_1$ is given by the (scaled) atomic norm
    $
        h^\ast(\wb) = \lambda\|\wb\|_{\overline\Dfrak_\pm}
    $
    with respect to the preprocessed and mirrored data set $\overline\Dfrak_\pm$.
\end{thm}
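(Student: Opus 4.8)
The plan is to chain the linear reformulation of Section~\ref{sec:1normToLin} with the two equivalent definitions of the atomic norm in \eqref{eq:AtomicNorm_Def}, and then invoke the data-reduction mechanism of Theorem~\ref{thm:Preprocessing} to pass from the full mirrored atom set $\Dfrak_\pm$ to its extreme-point subset $\overline\Dfrak_\pm$.

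First I would rewrite the trajectory-specific problem \eqref{eq:OP_trajectorySpecific} for $h(\ab)=\lambda\|\ab\|_1$ using the split $\ab=\ab_+-\ab_-$. Exactly as in \eqref{eq:DPC_1normLinear}, the complementarity $\ab_{+,i}^\ast\ab_{-,i}^\ast=0$ of the minimizer yields $\|\ab^\ast\|_1=\oneb^\top\ab_\pm^\ast$, so that
\[
 h^\ast(\wb)=\lambda\min_{\ab_\pm\geq\zerob}\ \oneb^\top\ab_\pm\quad\text{s.t.}\quad \wb=\Dbc_\pm\ab_\pm .
\]
Since the columns of $\Dbc_\pm$ are precisely the atoms $\wb^{(i)}_\pm\in\Dfrak_\pm$, this is verbatim the conical program \eqref{eq:AtomicNorm_Def2} evaluated on the atom set $\Dfrak_\pm$, i.e.\ $h^\ast(\wb)=\lambda\|\wb\|_{\Dfrak_\pm}$.

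Next I would replace $\Dfrak_\pm$ by $\overline\Dfrak_\pm$. Here Theorem~\ref{thm:Preprocessing} does the work: its proof shows that any minimizer places zero weight on atoms in $\Dfrak_\pm\setminus\overline\Dfrak_\pm$, because each such atom is itself a conical combination of extreme points whose coefficients sum to strictly less than one, so using it can only be strictly improved upon. Hence the optimal value is unchanged when the available atoms are restricted to $\overline\Dfrak_\pm$, giving $\|\wb\|_{\Dfrak_\pm}=\|\wb\|_{\overline\Dfrak_\pm}$. Equivalently, one may observe that both the gauge \eqref{eq:AtomicNorm_Def1} and the conical program \eqref{eq:AtomicNorm_Def2} depend only on $\conv{\Dfrak_\pm}=\conv{\overline\Dfrak_\pm}$, the latter equality being immediate from the extreme-point construction. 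Finally I would verify that $\|\cdot\|_{\overline\Dfrak_\pm}$ is a bona fide atomic norm in the sense of Section~\ref{sec:atomic_norm}: by construction $\overline\Dfrak_\pm$ consists of the extreme points of its convex hull, the mirroring makes it centrally symmetric with the origin as centroid, and Assumption~\ref{assum:fullRank} forces $\mathrm{span}(\Dfrak)=\R^{(m+p)(N_p+N)}$, so the origin lies in the interior of $\conv{\overline\Dfrak_\pm}$ and the gauge is finite on the whole trajectory space.

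The individual steps are short, so the content is conceptual rather than computational. I expect the main obstacle to be the careful bookkeeping around well-definedness: one must confirm that Assumption~\ref{assum:fullRank} simultaneously guarantees feasibility of \eqref{eq:OP_trajectorySpecific} (every $\wb$ admits some $\ab$ with $\wb=\Dbc\ab$) and full-dimensionality of $\conv{\overline\Dfrak_\pm}$, which together are precisely what upgrade the gauge from a merely finite-valued sublinear functional to a genuine norm. A secondary point worth stating cleanly is that the extreme-point reduction used in the middle step is exactly the geometric fact already established in Theorem~\ref{thm:Preprocessing}, so that the present theorem is essentially its atomic-norm reinterpretation.
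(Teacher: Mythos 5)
Your proposal is correct and follows essentially the same route as the paper's proof: the linear reformulation of the 1-norm via $\ab_\pm$, the reduction from $\Dfrak_\pm$ to $\overline\Dfrak_\pm$ via Theorem~\ref{thm:Preprocessing}, and the identification of the resulting conical program with \eqref{eq:AtomicNorm_Def2}, with attainment/finiteness secured by Assumption~\ref{assum:fullRank} and Lemma~\ref{lem:conicalHull}. The only (cosmetic) differences are that you identify the atomic-norm program on the full mirrored set before reducing to the extreme points, and you add the equivalent observation that the gauge depends only on $\conv{\Dfrak_\pm}=\conv{\overline\Dfrak_\pm}$.
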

\begin{proof}
    We state the result for $\lambda = 1$ without loss of generality.
    First, note that the set $\overline\Dfrak_\pm$ satisfies all assumptions to induce an atomic norm $\|\wb\|_{\overline\Dfrak_\pm}$. 
    As in \eqref{eq:DPC_1normLinear}, we can rewrite the optimization problem \eqref{eq:OP_trajectorySpecific} with $h(\ab) = \|\ab\|_1$ as
\begin{equation}\label{eq:trajSpec_1normLinear}
    h^\ast(\wb) = \min_{\ab_\pm} 
      \oneb^\top \ab_\pm \quad
    \text{s.t.} \quad \wb = \Dbc_\pm\ab_\pm, \; 
    \ab_\pm \geq \zerob. 
    \end{equation}
    From Theorem~\ref{thm:Preprocessing}, we can replace $\Dbc_\pm$ by $\overline\Dbc_\pm$ without affecting the optimal solution. Furthermore, we equivalently rewrite \eqref{eq:trajSpec_1normLinear} with respect to the set $\overline\Dfrak_\pm$ as
     $$   h^\ast(\wb) = 
      \min_{\ab_\pm} \!\! \sum_{\wb^{(i)}_\pm\in\overline\Dfrak_\pm}\!\! \ab_{\pm,i} \;\; 
      \text{s.t.}\;\; \wb = \!\!\! \sum_{\wb^{(i)}_\pm\in\overline\Dfrak_\pm} \!\!\! \ab_{\pm,i} \wb^{(i)}_\pm, \; 
       \ab_\pm \geq \zerob$$ 
    This matches \eqref{eq:AtomicNorm_Def2} if the infimum is attained, which is indeed the case for all $\wb\in\R^{(m+p)(N_p+N)}$ due to Lemma~\ref{lem:conicalHull} with Assumption~\ref{assum:fullRank}.  
\end{proof}
This relation enables a geometric intuition of regularization costs in terms of $\mathrm{conv}(\overline\Dfrak_\pm)$ via its gauge (see our explanation below \eqref{eq:AtomicNorm_Def}). 
Furthermore, under Assumption~\ref{assum:fullRank}, Theorem~\ref{thm:1NormTrajSpecificAtomic} enables the reformulation of DPC \eqref{eq:DPC} with 1-norm regularization as atomic norm regularization in trajectory space
$$
    \min_{\wb} J(\wb) + \lambda  \|\wb\|_{\overline\Dfrak_\pm} \; \text{s.t.} \; \wb \in \Wbc,
$$
which will be further explored in future work.

\subsection{Predictive behavior of DPC with 1-norm regularization}\label{sec:implicit_predictor}

In this section, we analyze the predictive behavior of DPC with 1-norm regularization via the concept of implicit predictors (see Definition~\ref{def:implicit_predictor}), and provide two structural results with immediate practical implications. We assume that the control objective is a quadratic output-tracking formulation
\begin{equation}\label{eq:outputTrackingObj}
    J(\xib, \ubs, \ybs) = \|\ybs-\ybs_\text{ref}\|_\Qbc^2 + J_\ubs(\xib, \ubs) = \|\ybs\|_\Qbc^2 + J_\ubs(\xib, \ubs)
\end{equation}
with reference $\ybs_\text{ref}=\zerob$, positive semidefinite weighing matrix $\Qbc$, and arbitrary input control objective $J_\ubs(\xib, \ubs)$. 
Non-zero reference tracking $\ybs_\text{ref}\neq\zerob$ will be considered in future work, but can already be made compatible via an appropriate coordinate shift for the output (including data).
Similarly for simplicity, we assume that no additional output constraints are present, i.e., $\Yc = \R^{p N}$, and refer to \cite[Sec.~III.C]{KLAEDTKE2023} for general observations on their impact. 
Given these specifications and the reformulation in Section~\ref{sec:1normToLin} combined with Theorem~\ref{thm:Preprocessing}, \eqref{eq:implicitPredictorOP} can be simplified to 

\begin{align}\label{eq:implicitPredictorOP_1Norm}
\hat\ybs_\text{DPC}(\zb) = \argmin_\ybs \min_{\overline\ab_\pm} &
 \|\ybs\|_\Qbc^2 + \lambda \oneb^\top \overline\ab_\pm  \\
\text{s.t.} \quad \quad  \begin{pmatrix}
    \zb \\ \ybs 
\end{pmatrix} &= \overline\Dbc_\pm\overline\ab_\pm,\quad \overline\ab_\pm\geq \zerob.\nonumber
\end{align}
We can immediately see that \eqref{eq:implicitPredictorOP_1Norm} is a multiparametric quadratic program (mpQP) with parameters $\zb := \begin{pmatrix}
    \xib^\top & \ubs^\top
\end{pmatrix}^\top$. It is well known \cite{borrelli2017} that the optimal solution to such an mpQP is a piecewise affine (PWA) function
$$
    \hat\ybs_\text{DPC}(\zb) = \Fb_i\zb+\gb_i, \;\zb \in\Cbc_i
$$ 
over polyhedral sections $\Cbc_i$ of the parameter space (here, the state-input space $ \R^{(m+p)N_p+mN}=\bigcup\Cbc_i$). Furthermore, the polyhedral sections are so-called critical regions (CRs) characterized by (in)activity of inequality constraints. Note that inactivity of an inequality constraint in \eqref{eq:implicitPredictorOP_1Norm}, i.e., $\overline\ab_{\pm, i}>0$, is equivalent to a corresponding data column $\overline\wb_{\pm}^{(i)}$ in $\overline\Dbc_\pm$ (or, equivalently, atom in $\overline\Dfrak_\pm$) being used for trajectory synthesis. Consequently, each CR can be linked to a particular subset of data employed for trajectory synthesis. However, it is not immediately clear which data columns are preferred for synthesis in specific regions of the state-input space. While an exhaustive analysis of the associated CRs as well as their dependence on the data $\Dbc$, and weights $\Qbc, \lambda$ is beyond this paper's scope, the following result provides a piece of the puzzle by establishing a scaling property of the CRs with $\lambda$.
\begin{thm}\label{thm:scaling}
    Consider the two implicit predictors $\hat\ybs_\text{DPC}(\zb)$ resulting from \eqref{eq:implicitPredictorOP_1Norm} and $\hat\ybs_{\text{DPC}, \eta}(\zb)$ resulting from a scaled regularization weight $\tilde\lambda = \eta \cdot \lambda$. Their respective CRs $\Cbc_i$ and $\Cbc_{\eta,i}$ are defined by the same set of (in)active constraints and satisfy the scaling relation $\Cbc_{\eta,i} = \eta \cdot \Cbc_i$.
\end{thm}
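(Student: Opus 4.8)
The plan is to exploit a joint scaling symmetry of the mpQP \eqref{eq:implicitPredictorOP_1Norm} in its optimization variables and in the parameter $\zb$. Since $\tilde\lambda = \eta\cdot\lambda$ with both being (positive) regularization weights, we have $\eta>0$. I would first write out the scaled problem (with weight $\eta\lambda$) evaluated at a parameter value of the form $\zb = \eta\zb'$, and then apply the change of variables $\ybs = \eta\ybs'$ and $\overline\ab_\pm = \eta\overline\ab_\pm'$.

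Under this substitution, the equality constraint $\big(\zb^\top\,\,\ybs^\top\big)^\top = \overline\Dbc_\pm\overline\ab_\pm$ becomes $\big(\zb'^\top\,\,\ybs'^\top\big)^\top = \overline\Dbc_\pm\overline\ab_\pm'$ after dividing by $\eta$, and the conic constraint $\overline\ab_\pm \geq \zerob$ becomes $\overline\ab_\pm'\geq\zerob$; both are invariant because they are positively homogeneous in the stacked vector $(\zb,\ybs,\overline\ab_\pm)$. The objective transforms as $\|\ybs\|_\Qbc^2 + \eta\lambda\,\oneb^\top\overline\ab_\pm = \eta^2\big(\|\ybs'\|_\Qbc^2 + \lambda\,\oneb^\top\overline\ab_\pm'\big)$, since the quadratic term contributes a factor $\eta^2$ and the regularization term contributes $\eta\cdot\eta = \eta^2$. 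As $\eta^2>0$ is a constant multiplicative factor, it leaves the $\argmin$ unchanged, so the minimizer of the scaled problem at $\eta\zb'$ is exactly $\eta$ times the minimizer of the original problem (weight $\lambda$) at $\zb'$. In particular $\hat\ybs_{\text{DPC},\eta}(\eta\zb') = \eta\,\hat\ybs_\text{DPC}(\zb')$, and because $\eta>0$ maps the nonnegativity constraints one-to-one, a component $\overline\ab_{\pm,i}^\ast$ vanishes in one problem if and only if it vanishes in the other, so the two problems share the same active/inactive constraint partition.

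Finally, I would translate this correspondence into the statement about critical regions. A parameter $\zb$ lies in the critical region $\Cbc_{\eta,i}$ of the scaled problem, i.e.\ yields active set $i$, precisely when $\zb/\eta$ yields active set $i$ for the original problem, i.e.\ $\zb/\eta\in\Cbc_i$, which is equivalent to $\zb\in\eta\,\Cbc_i$. This delivers both conclusions at once: the two partitions are indexed by the same collection of (in)active constraint sets, and $\Cbc_{\eta,i} = \eta\,\Cbc_i$. I expect the main obstacle to be the careful bookkeeping of the active-set bijection, namely confirming that the change of variables preserves not merely optimal values but the identity of the active inequality constraints; this rests on the sign-preserving scaling by $\eta>0$ together with the standard mpQP fact \cite{borrelli2017} that CRs are characterized by their active sets. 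One should also remark that if $\Qbc$ is only positive semidefinite and minimizers are nonunique, the argument applies verbatim to any consistent active-set selection, so the CR partition still scales as claimed.
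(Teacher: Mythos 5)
Your proof is correct and rests on the same idea as the paper's: the data of the mpQP \eqref{eq:implicitPredictorOP_1Norm} are jointly positively homogeneous in $(\zb,\ybs,\overline\ab_\pm)$, so scaling $\lambda$ by $\eta>0$ together with the parameter $\zb$ rescales the optimizer by $\eta$, preserves the identity of the (in)active constraints, and hence scales each CR by $\eta$. The paper reaches the same conclusion after first eliminating the equality constraints via a nullspace basis (scaled by $\eta$ for the second problem), which is exactly equivalent to your direct substitution $\ybs=\eta\ybs'$, $\overline\ab_\pm=\eta\overline\ab_\pm'$; your route is slightly more elementary but not substantively different.
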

\begin{proof}
    For brevity, we rewrite \eqref{eq:implicitPredictorOP_1Norm} as
    \begin{align*}
        \min_\sbb \sbb^\top \Hb \sbb + \lambda \fb^\top \sbb \quad\text{s.t.}\quad \Ab \sbb \leq \zerob, \; \Gb\sbb = \Eb\zb 
    \end{align*}
    with $\Hb := \blkdiag{\Qb, \zerob}$, $\fb^\top:=\begin{pmatrix}\zerob^\top & \oneb^\top\end{pmatrix}$, $\Ab := 
        \begin{pmatrix}
            \zerob & -\Ib
        \end{pmatrix}$,
    \begin{align*}
        \Gb := \begin{pmatrix} \begin{matrix}
            \zerob \\ -\Ib
        \end{matrix}
             & \overline\Dbc_\pm
         \end{pmatrix}
        , \quad  \Eb := \begin{pmatrix}
            \Ib \\
            \zerob
        \end{pmatrix},\;\text{and}\; \sbb:=\begin{pmatrix}
        \ybs \\ \overline\ab_\pm
    \end{pmatrix}.
    \end{align*}
    Next, we eliminate the equality constraint by re-parametrizing $\sbb = \Gb^+\Eb\zb+\Vb\alphab$, where $\Vb$ is a matrix whose columns form a basis of the nullspace $\kernel{\Gb}$. Rewriting the problem in terms of the new optimization variable $\alphab$ and dropping cost terms constant w.r.t. $\alphab$ yields
    \begin{align}
        \min_\alphab \alphab^\top \Vb^\top &\Hb \Vb^\top \alphab + \left(\zb^\top \Eb^\top\Gb^{+, \top}\Hb + \lambda \fb^\top \right)\Vb\alphab \label{eq:scalingProof_OPorig}\\&\text{s.t.}\quad \Ab \Gb^+\Eb\zb + \Ab \Vb\alphab \leq \zerob. \nonumber
    \end{align}
    Now, consider a similar procedure with the scaled weight $\tilde\lambda = \eta\cdot\lambda$. However, the equality constraints are eliminated using a scaled matrix $\widetilde\Vb = \eta\Vb$, resulting in the parametrization $\sbb = \Gb^+\Eb\zb+\eta\, \Vb\betab$. Furthermore, we introduce the scaled state-input coordinates $\tilde\zb$ with $ \tilde\zb := \eta^{-1} \zb$. Reformulating the optimization problem in terms of these modifications yields
    \begin{align}
        \min_{\betab} \eta^2\betab^\top \Vb^\top &\Hb \Vb^\top \betab + \left(\eta\,\tilde\zb^\top \Eb^\top\Gb^{+, \top}\Hb + \eta\,\lambda \fb^\top \right)\eta\,\Vb\betab \nonumber\\ &\text{s.t.}\quad \eta \, \Ab \Gb^+\Eb\tilde\zb  + \eta\, \Ab \Vb\betab \leq \zerob \nonumber 
    \end{align}
    Since neither the scaling of the entire objective function with $\eta^2$ nor the l.h.s of the inequality constraints with $\eta$ affects the solution, we end up with
    \begin{align}
        \min_{\betab} \betab^\top \Vb^\top &\Hb \Vb^\top \betab + \left(\tilde\zb^\top \Eb^\top\Gb^{+, \top}\Hb + \lambda \fb^\top \right)\Vb\betab \label{eq:scalingProof_OPscaled}\\ &\text{s.t.}\quad  \Ab \Gb^+\Eb\tilde\zb  +  \Ab \Vb\betab \leq \zerob \nonumber.
    \end{align}
    Comparing \eqref{eq:scalingProof_OPorig} and \eqref{eq:scalingProof_OPscaled}, we see that their parametric optimizers satisfy $\alpha^\ast(\zb) = \beta^\ast(\tilde\zb)$. 
    Hence, both share the same partitioning $\widetilde\Cbc_{\eta,i} = \Cbc_i$ of the (scaled) state-input space, and reverting this scaling w.r.t. the CRs yields 
    $
    \tilde\zb \in \widetilde\Cbc_{\eta, i} 
    \!\!\iff \!\!\zb \in \eta \, \widetilde\Cbc_{\eta, i}
    $ such that $\Cbc_{\eta, i} = \eta \, \widetilde\Cbc_{\eta, i} = \eta \,\Cbc_i$. Finally, neither of the re-parametrizations changed the order or meaning of the inequality constraints. Hence, all pairs $(\tilde \Cbc_i, \Cbc_i)$ share the same set of (in)active equality constraints.
\end{proof}
The scaling property established in Theorem~\ref{thm:scaling} is visualized in Fig.~\ref{fig:scalingCRs}. Importantly, since every CR is associated with a subset of data being used for trajectory synthesis, Theorem~\ref{thm:scaling} characterizes how data usage changes throughout the state-input space with changing regularization weight $\lambda$. Hence, tuning of $\lambda$ should ideally be informed by this property.
\begin{figure}
	\centering
	\def\svgwidth{\linewidth}
    \includeinkscape{Figures/scalingCRs_done}
	\caption{CRs of the implicit predictor $\hat\xbs_\text{DPC}(\xb_0, \ubs)$ for $\Qbc=1$ and the same data $\overline\Dfrak$ (blue) as in Fig.~\ref{fig:symmetryPredictor}. Scaling of the regularization weight (a) $\lambda=100$ (green) and (b) $\lambda=50$ (orange) results in an equivalent scaling of the CRs (compare bottom left and top right). Note that top and bottom row only differ in their zoom level.}
	\label{fig:scalingCRs}
\end{figure}
The next result concerns a symmetry that is implied solely by the design choices made in the setup of the OCP.
\begin{thm}\label{thm:symmetry}
    Consider the DPC problem \eqref{eq:DPC} with $J(\zb, \ybs) = J_\ybs(\zb, \ybs)+J_u(\zb)$ and $\Yc=\R^{p N}$. If $J_\ybs(\zb, \ybs)$ and $h(\ab)$ are even functions, and the parametric optimizer $\ybs^\ast(\zb)$ is unique for all feasible $\zb$, then the implicit predictor $\hat\ybs_\text{DPC}(\zb)$ resulting from \eqref{eq:implicitPredictorOP} is an odd function.
\end{thm}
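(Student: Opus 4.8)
The plan is to eliminate the inner variable $\ab$, reduce the implicit predictor to the minimization of an \emph{even} objective over $\ybs$, and then transfer that symmetry to the argmin via a sign-flip substitution. Writing $\wb = \begin{pmatrix}\zb^\top & \ybs^\top\end{pmatrix}^\top$ as before, the inner minimization over $\ab$ subject to $\wb = \Dbc\ab$ is precisely the trajectory-specific effect $h^\ast(\wb)$ of Definition~\ref{def:trajectorySpecific}. Since $J(\zb,\ybs) = J_\ybs(\zb,\ybs) + J_u(\zb)$ and $J_u(\zb)$ is constant with respect to the outer variable $\ybs$, problem \eqref{eq:implicitPredictorOP} with $\Yc = \R^{pN}$ collapses to
$$
\hat\ybs_\text{DPC}(\zb) = \argmin_\ybs\; J_\ybs(\zb,\ybs) + h^\ast\!\left(\begin{pmatrix}\zb \\ \ybs\end{pmatrix}\right).
$$
It therefore suffices to show that $g(\zb,\ybs) := J_\ybs(\zb,\ybs) + h^\ast\!\left(\begin{pmatrix}\zb^\top & \ybs^\top\end{pmatrix}^\top\right)$ is even in $(\zb,\ybs)$ and that this promotes to an oddness relation for the unique minimizer.

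First I would establish that $h^\ast$ is an even function. Because $\Dbc$ is linear, the feasible set of \eqref{eq:OP_trajectorySpecific} for the negated trajectory satisfies $\{\ab : \Dbc\ab = -\wb\} = -\{\ab : \Dbc\ab = \wb\}$; that is, $\ab$ is feasible for $\wb$ if and only if $-\ab$ is feasible for $-\wb$. Since $h$ is even, $h(-\ab)=h(\ab)$, so the optimal value over the feasible set for $-\wb$ equals that over the feasible set for $\wb$, giving $h^\ast(-\wb)=h^\ast(\wb)$. (Extending $h^\ast$ by $+\infty$ on infeasible trajectories preserves evenness, since $\image{\Dbc}$ is a symmetric subspace.) Combining this with the assumed evenness of $J_\ybs$, and noting that negating $\wb$ negates both $\zb$ and $\ybs$, yields $g(-\zb,-\ybs)=g(\zb,\ybs)$.

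Finally I would transfer the symmetry to the minimizer. The parameters $\zb$ on which $\hat\ybs_\text{DPC}$ is defined are the feasible ones, namely $\zb\in\image{\Zb}$; as a linear subspace this is symmetric, so $-\zb$ is feasible whenever $\zb$ is. For such $\zb$, consider $\hat\ybs_\text{DPC}(-\zb) = \argmin_\ybs g(-\zb,\ybs)$ and substitute $\ybs = -\ybs'$. Evenness of $g$ gives $g(-\zb,-\ybs')=g(\zb,\ybs')$, so $\ybs'$ minimizes $g(\zb,\cdot)$ exactly when $\ybs=-\ybs'$ minimizes $g(-\zb,\cdot)$; the bijection $\ybs'\mapsto-\ybs'$ thus maps minimizers to minimizers. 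By the assumed uniqueness of the parametric optimizer, each side is a single point, and since the unique minimizer of $g(\zb,\cdot)$ is $\hat\ybs_\text{DPC}(\zb)$, we obtain $\hat\ybs_\text{DPC}(-\zb)=-\hat\ybs_\text{DPC}(\zb)$, i.e. the predictor is odd. The main obstacle is precisely the correct use of the uniqueness hypothesis: without it the two symmetric argmin \emph{sets} would merely be reflections of one another, and one could not single out the pointwise relation; uniqueness is exactly what promotes the set-level symmetry to the pointwise identity.
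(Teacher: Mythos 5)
Your proof is correct and takes essentially the same route as the paper: negate all variables, invoke evenness of the cost terms and linearity of the data constraint, and conclude oddness of the unique minimizer. The only difference is organizational — you first marginalize out $\ab$ via the trajectory-specific effect $h^\ast$ and prove its evenness separately, and you spell out the feasibility of $-\zb$ and the role of uniqueness more explicitly than the paper does — but the underlying symmetry argument is identical.
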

\begin{proof}
    Given the setup, we simplify \eqref{eq:implicitPredictorOP} by dropping \eqref{eq:implicitPredictorOPsetConstr} and $J_u(\zb)$, since the latter is unrelated to $\ybs$, yielding
$$
 \argmin_\ybs \min_{\ab} 
 J_\ybs(\zb, \ybs) + h(\ab) \quad
\text{s.t.} \quad \begin{pmatrix}
     \zb \\ \ybs 
\end{pmatrix} = \begin{pmatrix}
    \Zb \\ \Yb 
\end{pmatrix}\ab.
$$
Due to our symmetry assumptions on $J_\ybs(\zb, \ybs)$ and $h(\ab)$, this problem is equivalent to 
\begin{align*}
    -\hat\ybs_\text{DPC}(-\zb) &= \argmin_{\ybs} \min_{\ab} 
     J_\ybs(-\zb, -\ybs) + h(-\ab) \\
    &\qquad \qquad \quad\text{s.t.} \quad \begin{pmatrix}
         -\zb \\ -\ybs 
    \end{pmatrix} = \begin{pmatrix}
        \Zb \\ \Yb 
    \end{pmatrix}(-\ab).
\end{align*}
Hence, we have 
    $-\hat\ybs_\text{DPC}(-\zb)=\hat\ybs_\text{DPC}(\zb)$, which is odd.
\end{proof}
Our setup with $J_\ybs(\xib,\ubs, \ybs) = \|\ybs\|_\Qbc^2$ and $h(\ab)=\lambda\|\ab\|_1$ clearly satisfies the symmetry assumptions of Theorem~\ref{thm:symmetry}. 
The practical implications become evident when comparing this structure to that of potential data-generating systems. That is, if the true system dynamics share this symmetry, which is trivially true for LTI systems, the implicit predictor might be able to represent the system well throughout the whole trajectory space. For nonlinear systems without this symmetry, the desired approximation of the true dynamics may only ever hold locally. Even worse, exploration of certain parts of the trajectory space may actively degrade DPC's prediction accuracy (and thus usually its performance) in other parts, where the mirrored data columns are being used for trajectory synthesis. These considerations are visualized in Fig.~\ref{fig:symmetryPredictor}. 
\begin{figure}
	\centering
	\def\svgwidth{\linewidth}
    \includeinkscape{Figures/symmetryPredictor_done}
	\caption{Comparison of true dynamics and implicit predictor. (a) System dynamics $f(\xb, \ub)= 2 \xb^2+2\ub^2-1$ (green) with $20$ data samples $\Dfrak$ generated from $\xb_0^{(i)}, \ubs^{(i)}\sim\Uc_{[-1,1]}$ (blue). (b) Implicit predictor $\hat\xbs_\text{DPC}(\xb_0, \ubs)$ (orange) for $\lambda = 100, \Qbc = 1$ resulting from the $8$ relevant data samples $\overline\Dfrak$ (blue).}
	\label{fig:symmetryPredictor}
\end{figure}
Clearly, the true dynamics are not well represented throughout the whole trajectory space. In particular, Theorem~\ref{thm:symmetry} always implies $\hat\xbs_\text{DPC}(\zerob, \zerob)=\zerob$, leading to systematic prediction errors near the origin if $f(\zerob, \zerob)\neq\zerob$. We recommend either consciously leveraging this symmetry property when suitable for the underlying system or modifying the OCP \eqref{eq:DPC} to break it in an appropriate sense, e.g., by using affine or conical combinations \cite{Padoan2023}. These approaches may be explored in future work.

\section{Conclusions and Outlook}
\label{sec:Conclusions}

We analyzed the data usage and predictive behavior of DPC with 1-norm regularization using the concepts of trajectory-specific effect of regularization (Definition~\ref{def:trajectorySpecific}) and implicit predictors (Definition~\ref{def:implicit_predictor}). Our findings reveal that certain trajectory samples are never used for predictions, allowing their removal in a pre-processing step, while raising concerns about how this data usage may affect control performance. Additionally, we demonstrated that the trajectory-specific effect of 1-norm regularization is equivalent to the atomic norm \cite{Chandrasekaran2012_atomicNorm} for this reduced (and mirrored) dataset $\overline\Dfrak_\pm$, enabling intuitive geometric interpretations via the gauge of the convex hull $\mathrm{conv}(\overline\Dfrak_\pm)$. Regarding the predictive behavior, we showed that the implicit predictor $\hat\ybs_\text{DPC}(\xib, \ubs)$ is a PWA function whose CRs $\Cbc_i$ dictate local data usage and scale linearly with the regularization parameter $\lambda$. We also established a symmetry property $\hat\ybs_\text{DPC}(\xib, \ubs)=-\hat\ybs_\text{DPC}(-\xib, -\ubs)$, which can enhance or detract from prediction (and potentially control) performance, depending on whether the true system exhibits this symmetry.

In future work, we will further investigate the link to the atomic norm, conduct an exhaustive analysis of the implicit predictor, and extend our results to more general scenarios, such as general constraints $\Yc$, broader control objectives than \eqref{eq:outputTrackingObj}, or a mix of regularizations.


\begin{thebibliography}{00}

\footnotesize
\setlength{\parskip}{0pt}
\setlength{\itemsep}{1pt plus 0.3ex}

\bibitem{Yang2013}
H.~Yang and S.~Li.
\newblock A new method of direct data-driven predictive controller design.
\newblock {\em 2013 9th Asian Control Conference}, pp. 1--6, 2013.

\bibitem{Coulson2019DeePC}
J.~Coulson, J.~Lygeros, and F.~D{\"o}rfler.
\newblock Data-enabled predictive control: In the shallows of the {DeePC}.
\newblock {\em 18th European Control Conference}, pp. 307--312, 2019.

\bibitem{WILLEMS2005}
J.~C. Willems, P.~Rapisarda, I.~Markovsky, and B.~L.M. {De Moor}.
\newblock A note on persistency of excitation.
\newblock {\em Systems \& Control Letters}, 54(4):325--329, 2005.

\bibitem{Berberich2020}
J.~Berberich and F.~Allg{\"o}wer.
\newblock A trajectory-based framework for data-driven system analysis and control.
\newblock {\em 2020 European Control Conference}, pp. 1365--1370, 2020.

\bibitem{Coulson2019RegularizedDeePC}
J.~Coulson, J.~Lygeros, and F.~D{\"o}rfler.
\newblock Regularized and distributionally robust data-enabled predictive control.
\newblock {\em 2019 IEEE 58th Conference on Decision and Control}, pp. 2696--2701, 2019.

\bibitem{Dorfler2021}
F.~Dörfler, J.~Coulson, and I.~Markovsky.
\newblock Bridging direct and indirect data-driven control formulations via regularizations and relaxations.
\newblock {\em IEEE Transactions on Automatic Control}, 68(2):883--897, 2023.

\bibitem{Berberich2020stability}
J.~Berberich, J.~K{\"o}hler, M.~M{\"u}ller, and F.~Allg{\"o}wer.
\newblock Data-driven model predictive control with stability and robustness guarantees.
\newblock {\em IEEE Transactions on Automatic Control}, 66(4):1702--1717, 2021.

\bibitem{MARKOVSKY2008surveySLRA}
I.~Markovsky.
\newblock Structured low-rank approximation and its applications.
\newblock {\em Automatica}, 44(4):891--909, 2008.

\bibitem{Mattsson2021}
P.~Mattsson and T.~B. Schön.
\newblock On the regularization in {DeePC}.
\newblock {\em IFAC-PapersOnLine}, 56(2):625--631, 2023.
\newblock 22nd IFAC World Congress.

\bibitem{KLAEDTKE2023}
M.~Klädtke and M.~Schulze~Darup.
\newblock Implicit predictors in regularized data-driven predictive control.
\newblock {\em IEEE Control Systems Letters}, 7:2479--2484, 2023.

\bibitem{Klaedtke2025_AT_Accepted}
M.~Klädtke and M.~Schulze~Darup.
\newblock Towards explainable data-driven predictive control with regularizations.
\newblock {\em at - Automatisierungstechnik}, 2025.
\newblock accepted January 17, 2025.

\bibitem{Chandrasekaran2012_atomicNorm}
V.~Chandrasekaran, B.~Recht, P.~A. Parrilo, and A.~S. Willsky.
\newblock The convex geometry of linear inverse problems.
\newblock {\em Foundations of Computational Mathematics}, 12:805–849, 2012.

\bibitem{Markovsky2020}
I.~Markovsky and F.~Dörfler.
\newblock Identifiability in the behavioral setting.
\newblock {\em IEEE Transactions on Automatic Control}, 68(3):1667--1677, 2023.

\bibitem{DePersis2020}
C.~De~Persis and P.~Tesi.
\newblock Formulas for data-driven control: Stabilization, optimality, and robustness.
\newblock {\em IEEE Transactions on Automatic Control}, 65(3):909--924, 2020.

\bibitem{Breschi2022new}
V.~Breschi, A.~Chiuso, and S.~Formentin.
\newblock Data-driven predictive control in a stochastic setting: {a} unified framework.
\newblock {\em Automatica}, 152:110961, 2023.

\bibitem{Mattsson2024}
P.~Mattsson, F.~Bonassi, V.~Breschi, and T.~B. Schön.
\newblock On the equivalence of direct and indirect data-driven predictive control approaches.
\newblock {\em IEEE Control Systems Letters}, 8:796--801, 2024.

\bibitem{MPT3}
M.~Herceg, M.~Kvasnica, C.N. Jones, and M.~Morari.
\newblock {Multi-Parametric Toolbox 3.0}.
\newblock In {\em Proc.~of the European Control Conference}, pp. 502--510, 2013.

\bibitem{boyd2004}
S.~Boyd and L.~Vandenberghe.
\newblock {\em Convex Optimization}.
\newblock Cambridge University Press, 2004.

\bibitem{Barber1996_quickhull}
C.~B. Barber, D~P. Dobkin, and H.~Huhdanpaa.
\newblock The quickhull algorithm for convex hulls.
\newblock {\em ACM Trans. Math. Softw.}, 22(4):469–483, 1996.

\bibitem{borrelli2017}
F.~Borrelli, A.~Bemporad, and M.~Morari.
\newblock {\em Predictive Control for Linear and Hybrid Systems}.
\newblock Cambridge University Press, 2017.

\bibitem{Padoan2023}
A.~Padoan, F.~Dörfler, and J.~Lygeros.
\newblock Data-driven representations of conical, convex, and affine behaviors.
\newblock In {\em 2023 62nd IEEE Conference on Decision and Control}, pp. 596--601, 2023.

\end{thebibliography}
\end{document}